\pdfoutput=1
\documentclass[]{article}  
\usepackage{url,float}
\usepackage{graphicx}
\usepackage{amsmath}
\usepackage{amsfonts}
\usepackage{amssymb}
\usepackage{latexsym}
\usepackage{mathtools}
\usepackage{color}

\newcommand{\hide}[1]{}

\newcommand{\ABox}{
\raisebox{3pt}{\framebox[6pt]{\rule{6pt}{0pt}}}
}
\newenvironment{proof}{{\bf Proof:}}{\hfill\ABox}

\newtheorem{theorem}{{\bf Theorem}}

\newtheorem{lemma}{Lemma}


\newcommand{\lemlab}[1]{\label{lemma:#1}}
\newcommand{\thmlab}[1]{\label{thm:#1}}

\newcommand{\figlab}[1]{\label{fig:#1}}
\newcommand{\seclab}[1]{\label{sec:#1}}

\newcommand{\lemref}[1]{\ref{lemma:#1}}
\newcommand{\thmref}[1]{\ref{thm:#1}}

\newcommand{\figref}[1]{\ref{fig:#1}}

\def\q{{\theta}}




\newcommand{\squeezelist}{\setlength{\itemsep}{0pt}}


\usepackage{enumitem}

\title{%
Hamiltonian Quasigeodesics yield Nets
} 

\author{%
Joseph O'Rourke%
\thanks{Smith College, Northampton, MA, USA. \texttt{jorourke@smith.edu}.}
}

\date{\today}

\begin{document}
\maketitle

\begin{abstract}
This note establishes that every 
polyhedron
that has a Hamiltonian quasigeodesic can be edge-unfolded to a net,
and shows that the class of such polyhedra is infinite.
\end{abstract}

\section{Introduction}
This note establishes one result (Theorem~\thmref{HamQnet}) that is more an
observation than a theorem, as it largely depends on definitions and a straightforward argument.
Nevertheless, it may be of some interest, making connections between several different aspects of
convex polyhedra.

We start with the definitions needed to describe the result.
We restrict attention to convex polyhedra.
An \emph{edge-unfolding} of a convex polyhedron $P$ is a collection of edge cuts forming a spanning tree
of the vertices,
which unfolds the surface of $P$ to one piece in the plane.
If the planar unfolding is a (weakly) simple polygon, with no self-overlap, it is known as a \emph{net}.
It is a long unsolved problem to decide whether or not every convex polyhedron has a net.
This has become known as ``D\"urer's problem"~\cite{do-gfalop-07}~\cite{o-dp-13}.
The conclusion of Theorem~\thmref{HamQnet} is that, under certain conditions, $P$ has a net,
and there is an infinite class of such $P$.

A \emph{quasigeodesic} has at most $\pi$ surface angle to each side at every point, 
in contrast to geodesics which have exactly $\pi$ to each side.
Quasigeodesics can pass through vertices, and are geodesic segments between vertices.
Pogorelov proved that every convex polyhedron 
has at least three simple closed quasigeodesics~\cite{p-qglcs-49}. 
Here we focus on quasigeodesics that follow edges of the $1$-skeleton of $P$.
A \emph{Hamiltonian quasigeodesic} is a simple closed quasigeodesic following edges of $P$,
and passing through every vertex of $P$. Thus a Hamiltonian quasigeodesic is
a Hamiltonian circuit, but with angle restrictions at each vertex.
This notion was introduced and played a role in~\cite{QuasiTwist}.
Earlier, Hamiltonian unfoldings without the quasigeodesic condition (also called \emph{zipper edge-unfoldings}) were studied in~\cite{lddss-zupc-10}.

Of course not every convex polyhedron has a Hamiltonian quasigeodesic because some polyhedra
have no Hamiltonian circuit, for example, the rhombic dodecahedron.
In~\cite{QuasiTwist} it was shown that three of the Platonic solids---tetrahedron, octahedron, cube---have
a Hamiltonian quasigeodesic, but the dodecahedron and the icosahedron do not.
Note that the boundary of a doubly-covered convex polygon, which is treated as a polyhedron in this literature,
is a Hamiltonian quasigeodesic.

We can now state the theorem.
\begin{theorem}
\thmlab{HamQnet}
If a convex polyhedron $P$ has a Hamiltonian quasigeodesic $Q$,
then there is an edge-unfolding of $P$ to a net.
\end{theorem}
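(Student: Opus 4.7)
The plan is to show that cutting $P$ along all but one edge of $Q$ produces a spanning tree whose edge-unfolding is automatically a net. Fix any edge $e^*\in Q$ and let $T=Q\setminus\{e^*\}$; then $T$ is a Hamiltonian path, a spanning tree of $n-1$ edges that includes every vertex of $P$. So cutting along $T$ gives a valid edge-unfolding of $P$, and the only thing to verify is that the resulting planar development has no self-overlap.

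To analyze the unfolding I would first consider the two disks $F^+$ and $F^-$ into which the simple closed curve $Q$ separates the topological sphere $\partial P$. Since $Q$ is Hamiltonian, the interiors of $F^\pm$ contain no vertex of $P$, so each cap is intrinsically a flat polyhedral disk whose entire cone-point curvature is concentrated on its boundary $Q$. The quasigeodesic hypothesis says that at each vertex $v\in Q$ the angle of $F^\pm$ at $v$ is at most $\pi$; equivalently, the boundary exterior turn angle is non-negative at every $v$. Applying polyhedral Gauss--Bonnet to the flat disk $F^\pm$ (interior curvature zero, $\chi=1$), these non-negative turn angles sum to exactly $2\pi$. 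I would then invoke the standard planar fact that a closed polygonal curve whose exterior angles are all non-negative and total $2\pi$ bounds a convex polygon traced once counterclockwise. Hence the developing map embeds $F^+$ onto a convex polygon $U^+\subset\R^2$, and similarly $F^-$ onto a convex polygon $U^-$.

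Finally I would reassemble the caps. The actual edge-unfolding cuts only $T$ and leaves $e^*$ intact, so in the plane the caps $U^+$ and $U^-$ are glued isometrically along the developed copy of $e^*$, lying on opposite sides of the line through that edge (since that is exactly how the two faces sharing $e^*$ on $P$ are placed when we unfold across $e^*$). Two convex polygons sharing a full edge and lying on opposite sides of the line through it form a simple polygon, and this simple polygon is the desired net.

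The step I expect to be the main obstacle is the convexity of $U^\pm$: establishing that the developing map not only preserves length and angle locally but actually embeds each cap onto a convex polygon, rather than immersing the boundary as some self-crossing closed curve. This reduces to the planar lemma that a closed polygonal curve with non-negative exterior angles summing to $2\pi$ bounds a convex polygon, which can be proved by sorting its edges by direction angle and reassembling them as the boundary of the corresponding Minkowski sum.
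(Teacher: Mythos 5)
Your proposal is correct and follows essentially the same route as the paper: cut all of $Q$ except one edge $e^*$ (a spanning Hamiltonian cut-path), observe that each of the two caps bounded by $Q$ is a flat disk with boundary turning $\ge 0$ and hence develops to a convex polygon, and note that two convex polygons glued along $e^*$ lie on opposite sides of the line through $e^*$ and so cannot overlap. Your extra care about why the development is actually an embedding onto a convex polygon (via Gauss--Bonnet and the locally convex closed curve lemma) fills in a step the paper only asserts, but it is the same argument in substance.
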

\begin{proof}
The proof follows from two straightforward claims.
\begin{enumerate}[label={(\arabic*)}]
\item The portion of $P$ enclosed to the right or left of $Q$ is isometric to a convex polygon.
\item Joining two convex polygons $A$ and $B$ along a shared edge $e$ avoids overlap 
between $A$ and $B$.
\end{enumerate}
We now add more detail to these claims. Consider $Q$ directed, partitioning $P$
into two 
``halves" $A$ and $B$, with $P = A \cup B$.
\begin{enumerate}[label={(\arabic*)}]
\item Because $Q$ is a quasigeosdic, the angle to the left of every vertex of $Q$ is $\le \pi$.
Because $Q$ passes through every vertex of $P$, there are no vertices of $P$ to the left,
enclosed by $Q$, and 
because all curvature is concentrated at vertices, no curvature.
Therefore the region $A$ of the surface of $P$ to the left of $Q$ is isometric to
a planar convex polygon (convex because of the $\le \pi$ condition); and similarly for $B$.
Let $\bar{A}$ and $\bar{B}$ be planar embeddings of $A$ and $B$.
\item Although $\bar{A}$ and $\bar{B}$ are not necessarily congruent, their
boundaries are each composed of the same edges of $Q$.
Select any edge $e \in Q$. Then joining $\bar{A}$ to $\bar{B}$ to either side of and sharing $e$
produces a non-overlapping simple polygon---a net---because they sit on opposite sides
of the line containing $e$, and so cannot overlap one another.
One can view this as unfolding after cutting all edges of $Q$ but $e$, a spanning cut-path.
\end{enumerate}
\end{proof}

\section{Three Examples}
We present three examples. In each we label faces as F, R, K, L, T, B,
for Front, Right, bacK, Left, Top, Bottom.
The first is a pyramid, the top half of a regular octahedron:
see Fig.~\figref{PyrLayout_2}.
Note the angles along $Q$ are $\pi \times \{ \frac{1}{3}, \frac{1}{2}, \frac{2}{3}, \frac{5}{6} , 1\}$---all $\le \pi$.
Two different ways of joining $A$ to $B$ are shown.
\begin{figure}[htbp]
\centering
\includegraphics[height=0.95\textheight]{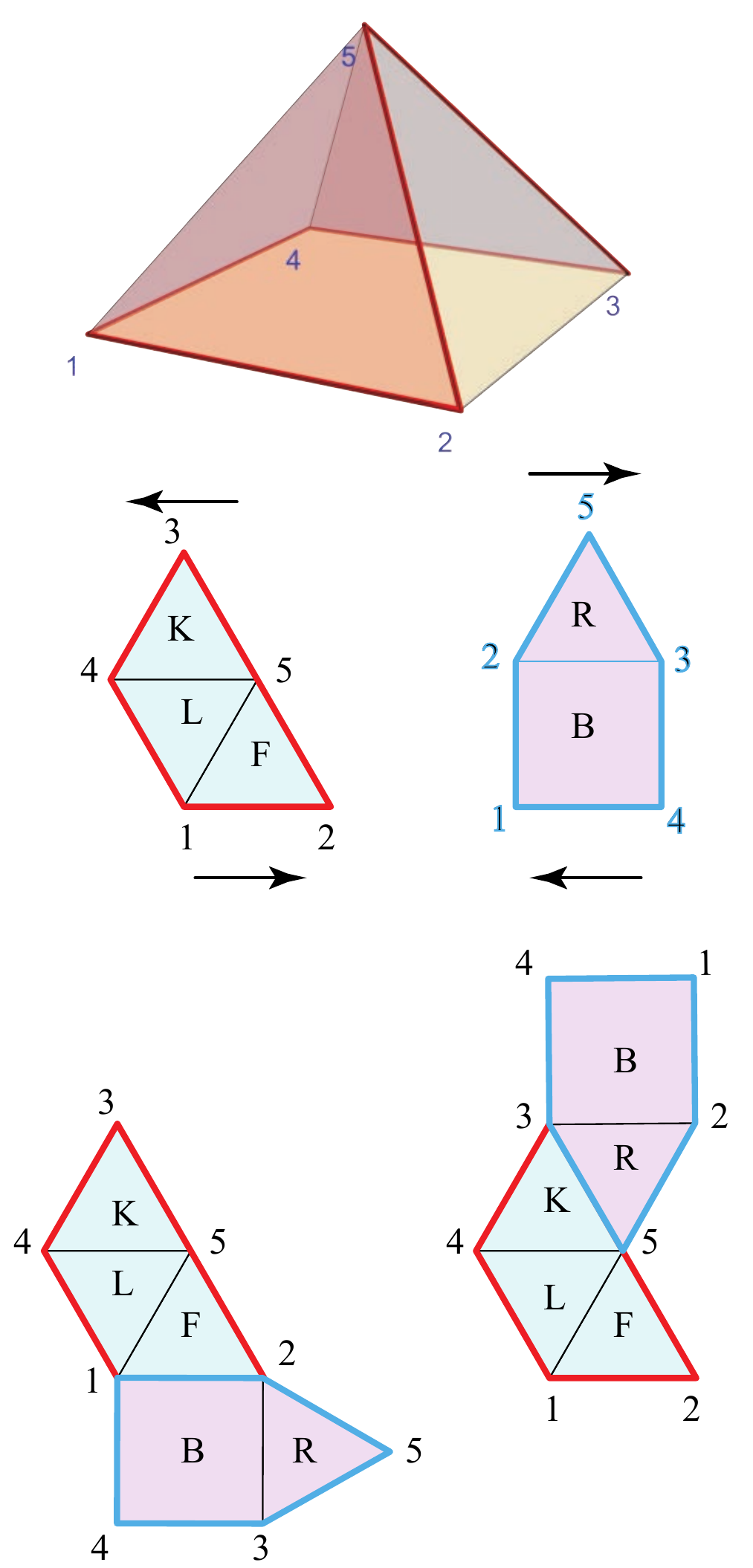}
\caption{$Q=12534$. Arrows indicate counterclockwise (red) ordering around $A$,
and clockwise (blue) ordering around $B$.}
\figlab{PyrLayout_2}
\end{figure}

Our second example is the full regular octahedron:
see Fig.~\figref{OctQLayout_3}.
Here and in the next example, $\bar{A}$ and $\bar{B}$ are congruent.
Again two unfoldings are shown.
\begin{figure}[htbp]
\centering
\includegraphics[height=0.95\textheight]{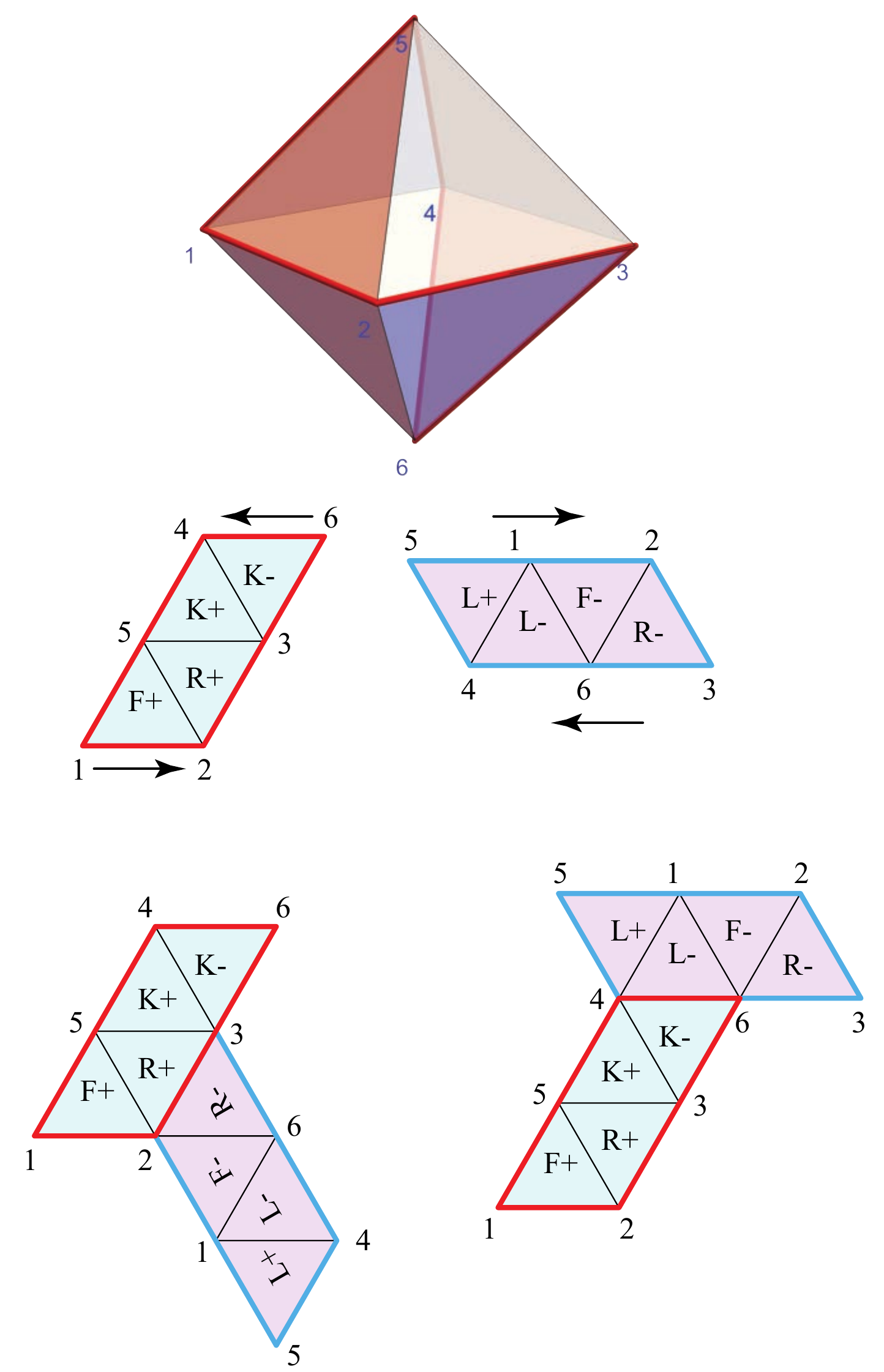}
\caption{$Q=123645$.}
\figlab{OctQLayout_3}
\end{figure}

Finally, 
Fig.~\figref{CubeLayout_5} illustrates the cube with $Q$ forming a ``napkin holder,"
leading to two of the $11$ nets of a cube.
\begin{figure}[htbp]
\centering
\includegraphics[height=0.95\textheight]{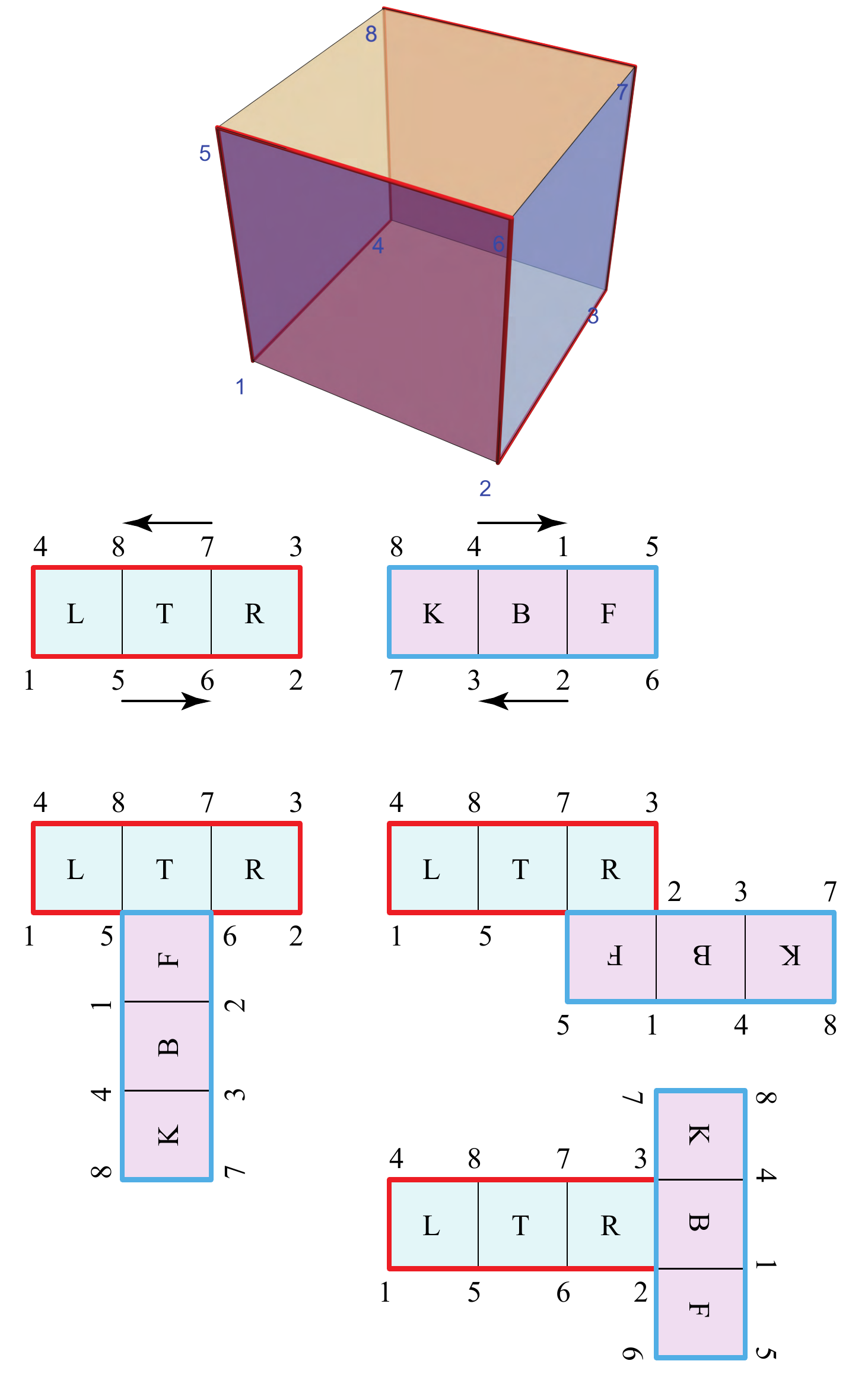}
\caption{$Q=15623784$.}
\figlab{CubeLayout_5}
\end{figure}

\section{Infinite Class of Polyhedra: Antiprisms}
\seclab{Infinite}
Because D\"urer's problem is long unsolved, it is of some interest to identify
infinite classes of polyhedra that have an edge-unfolding to a net.
E.g., see~\cite[Sec.~22.5]{do-gfalop-07}.
Here we argue that a class of antiprisms satisfy Theorem~\thmref{HamQnet},
but does not advance on D\"urer's problem.

An \emph{antiprism} is a subclass of prismatoids: it is the convex hull
of top and bottom
congruent regular $n$-gons $A,B$ in horizontal parallel planes.
A special case is when the top is rotated $\pi/n$ about a centered vertical axis,
then known as a right antiprism.
Let $h$ be the height of an antiprism and $\q$ the rotation of its top.

\begin{lemma}
\lemlab{antiprism}
Every antiprism with height $h \le h^*(n,\q)$,
where $h^*(n,\q)$ is a positive constant dependent on $n$ and $\q$,
has a Hamiltonian quasigeodesic formed by the edges of
the band separating $A$ and $B$.
\end{lemma}
\begin{proof}
It is intuitive clear that as $h \to 0$, the zigzag path of edges of the
band becomes closer and closer to a flat path, and so turning gently.
Fig.~\figref{AntiPrism_ab} illustrates two $n=4$ examples with different
$\q$, both forming turn angles of $\pi$.
Elementary calculations determine $h^*(n,\q)$,
such that all smaller values of $h$ lead to the zigzag path forming a quasigeodesic. 
\end{proof}

\medskip
Although this establishes a continuum of polyhedra with Hamiltonian
quasigeodesics, these polyhedra each have a straightforward ``band'' edge unfolding~\cite{o-ufncp-13}.
This is because the band of lateral triangles are all congruent, and they
unfold to a straight strip. It is then easy to place $A$ and $B$ on opposite
sides of the band to form a net.

\begin{figure}[htbp]
\centering
\includegraphics[width=0.75\linewidth]{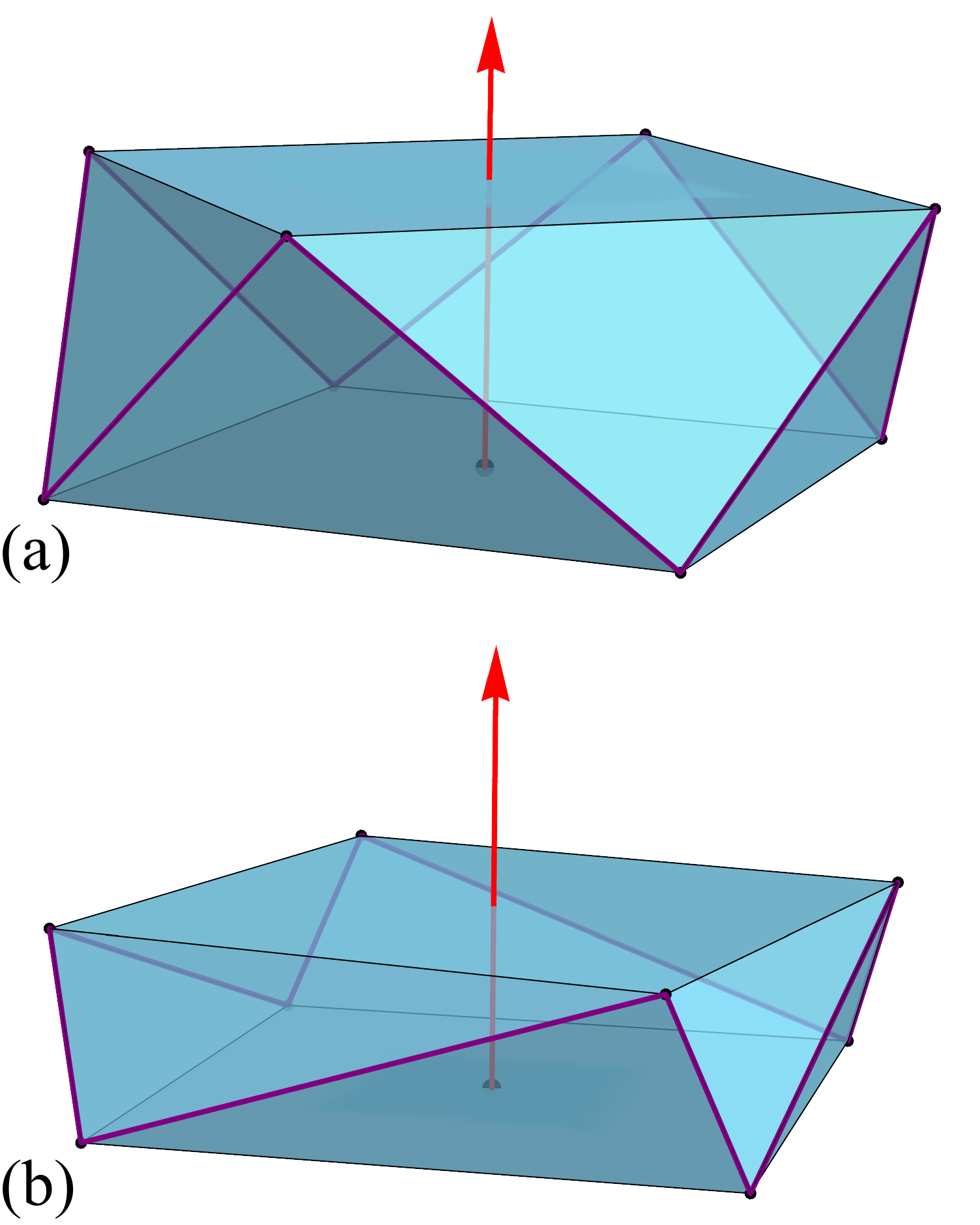}
\caption{Quasigeodesic turns are each $\pm \pi$. 
(a)~Twist angle $\q=\pi/4$, $h \approx 0.64$.
(b)~Twist angle $\q=0.2$, $h \approx 0.42$.}
\figlab{AntiPrism_ab}
\end{figure}

\section{Open Problems}
Although simple closed quasigeodecs are always present, it appears that
Hamiltonian quasigeodesics are relatively rare. It would be useful 
(and likely difficult) to characterize those convex
polyhedra that have a Hamiltonian quasigeodesic, for we then know that each has a net.


I have not explored non-convex polyhedra $P$, but Theorem~\thmref{HamQnet} still holds as long as $P$ has a Hamiltonian quasigeodesic.
For example, twisted nonconvex antiprisms also satisfy 
a version of Lemma~\lemref{antiprism}:
see Fig.~\figref{TwistedAntiPrism}.
But they also have a band edge unfolding.
\begin{figure}[htbp]
\centering
\includegraphics[width=0.75\linewidth]{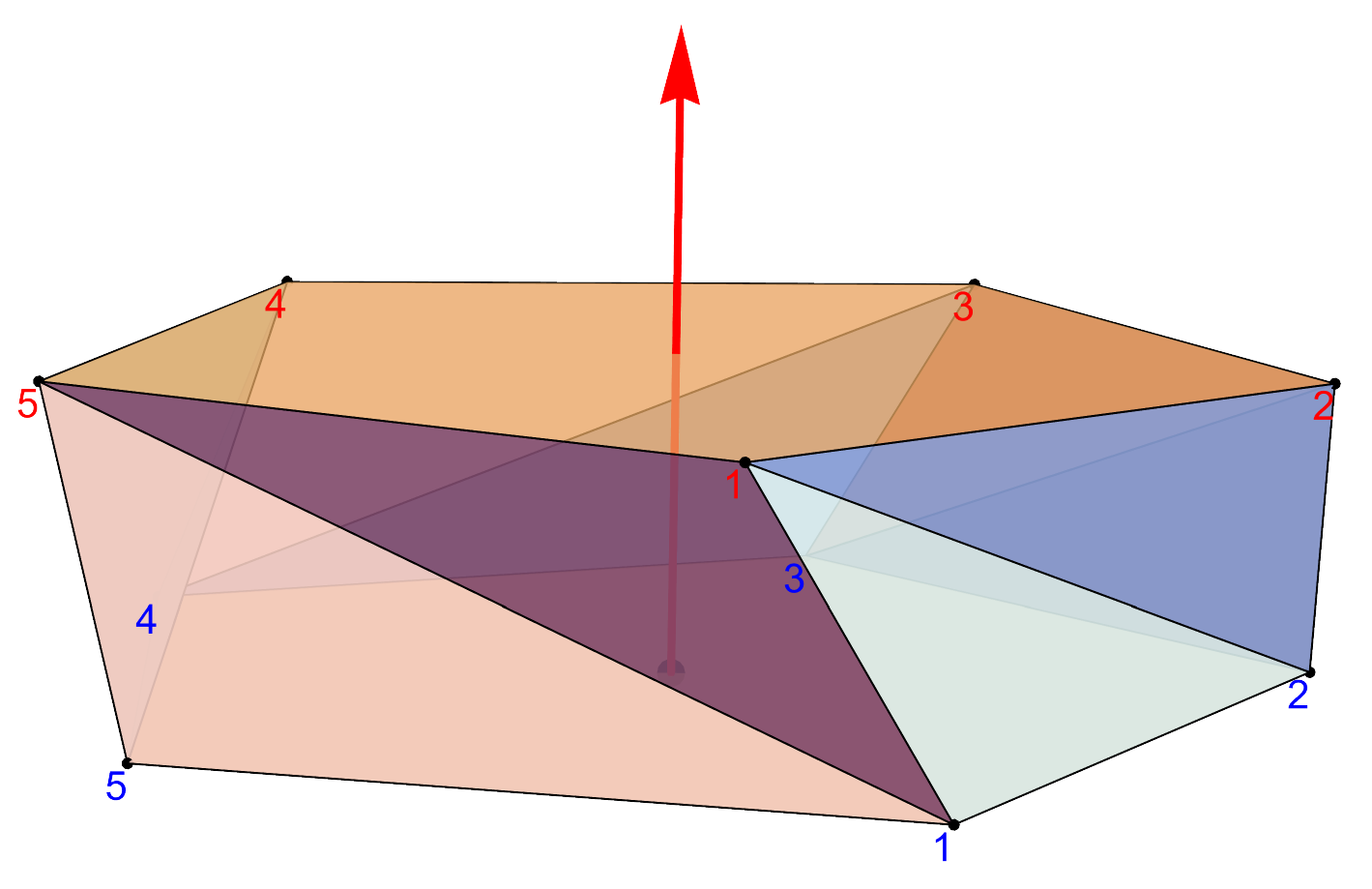}
\caption{A twisted nonconvex pentagonal pyramid.}
\figlab{TwistedAntiPrism}
\end{figure}

\paragraph{Acknowledgements.}
I thank my coauthors on~\cite{QuasiTwist} for stimulating discussions.

\bibliographystyle{alpha}
\newcommand{\etalchar}[1]{$^{#1}$}

\end{document}